\DeclareMathOperator{\im}{im} 
\newcommand{\R}{{\mathbb R}}
\newcommand{\Rnn}{\R_{\ge0}}
\newcommand{\Rp}{\R_{>0}}
\newcommand{\dd}[2]{\frac{\text{d} #1}{\text{d} #2}}
\newcommand{\lr}{\Leftrightarrow}
\newcommand{\LLL}{\mathcal{L}}
\newcommand{\AAA}{\mathcal{A}}
\newcommand{\GGG}{\mathcal{G}}
\newcommand{\PPP}{\mathcal{P}}
\newcommand{\SSS}{\mathcal{S}}
\newcommand{\CCC}{\mathcal{C}}
\newcommand{\RRR}{\mathcal{R}}
\newcommand{\TTT}{\mathcal{T}}
\newcommand{\EEE}{\mathcal{E}}
\newcommand{\cum}[1]{\breve{#1}}
\newcommand{\one}{1}
\newcommand{\con}{u}
\newcommand{\J}{{| \hspace{-1pt} J \hspace{-.5pt} |}}
\newcommand{\ts}{\textstyle}
\theoremstyle{plain}
\newtheorem{thm}{Theorem}
\newtheorem{pro}[thm]{Proposition}
\newtheorem{cor}[thm]{Corollary}
\newtheorem{lem}[thm]{Lemma}
\theoremstyle{definition}
\newtheorem{dfn}{Definition}
\newtheorem{exa}[dfn]{Example}
\newtheorem{sch}[dfn]{Scheme}
\def\blfootnote{\xdef\@thefnmark{}\@footnotetext}
\begin{document}

\title{Genetic Recombination \\ as a Chemical Reaction Network}

\author{Stefan M\"uller$^*$ and Josef Hofbauer}

\date{\today}

\maketitle

\begin{center}
\em \small
Dedicated to the memory of the Viennese chemists and mathematicians \\
Rudolf Wegscheider (1859--1935),
Hilda Geiringer (1893--1973), \\
and Friedrich J.\ M.\ Horn (1927--1978)
\end{center}

\begin{abstract}
\noindent
The process of genetic recombination can be seen as a chemical reaction network with mass-action kinetics.
We review the known results on existence, uniqueness, and global stability of an equilibrium
in every compatibility class and for all rate constants,
from both the population genetics and the reaction networks point of view.
%
%
\end{abstract}

\blfootnote{
\scriptsize

\noindent
{\bf S.\ M\"uller} (\href{mailto:stefan.mueller@ricam.oeaw.ac.at}{stefan.mueller@ricam.oeaw.ac.at}).  
Johann Radon Institute for Computational and Applied Mathematics, Austrian Academy of Sciences, Altenbergerstra{\ss}e 69, 4040 Linz, Austria
\smallskip

\noindent
{\bf J.\ Hofbauer} (\href{mailto:josef.hofbauer@univie.ac.at}{josef.hofbauer@univie.ac.at}). 
Department of Mathematics, University of Vienna, Oskar-Morgenstern-Platz 1, 1090 Wien, Austria 
\smallskip

\noindent
$^*$ Corresponding author
}


\section{Introduction}

Recombination, or chromosomal crossover, is the exchange of genetic material between a pair of homologous chromosomes.
It occurs when matching regions on matching chromosomes break and then reconnect to the other chromosome.
Recombination is the major evolutionary force to produce and maintain variation in a (sexual) population.

The process of recombination
\[
\begin{vmatrix}
0 \\ 0
\end{vmatrix}
+
\begin{vmatrix}
1 \\ 1
\end{vmatrix}
\overset{k}{\leftrightsquigarrow}
\begin{vmatrix}
0 \\ 1
\end{vmatrix}
+
\begin{vmatrix}
1 \\ 0
\end{vmatrix}
\]
involves two loci, two alleles per locus, and hence four gametes.
It can be written as a reversible chemical reaction
\[
g_1 + g_4 \overset{k}{\lr} g_2 + g_3 .
\]
This is a very simple reaction network with deficiency zero.
The recombination $g_1 + g_4 \to g_2 + g_3$ occurs at the rate $k \, p_1 p_4$
determined by the rate constant times the frequencies of the reacting gametes.
In the chemical setting, this corresponds to the assumption of mass-action kinetics.
The dynamical system for the gamete frequencies amounts to
\[
\dot p_1 = \dot p_4 = k \, (-p_1 p_4 + p_2 p_3) = - \dot p_2 = - \dot p_3 .
\]
There are conservation laws
\[
(p_1 + p_2)^{\ts \cdot} = (p_1 + p_3)^{\ts \cdot}  = (p_4 + p_2)^{\ts \cdot} = (p_4 + p_3)^{\ts \cdot} = 0
\]
of which three are linearly independent.
The equilibrium manifold is given by the conic
\[
p_1 p_4 = p_2 p_3 .
\]
In each stoichiometric compatibility class,
solutions converge to a unique detailed-balancing equilibrium.
This is usually proved by considering the so-called linkage disequilibrium function
$D = p_1 p_4 - p_2 p_3$, which satisfies
\[
\dot D = - k \, D \, \left( p_1 + p_2 + p_3 + p_4 \right) .
\]
Hence, $D$ converges to $0$.

Alternatively, one can try an ansatz for a Lyapunov function in the form
$V(p) = \sum_{i=1}^4 F(p_i)$.
Then,
\[
\dot V(p) =  \sum_{i=1}^4 F'(p_i) \, \dot p_i = - k \, D \left( F'(p_1) + F'(p_4) - F'(p_2) - F'(p_3) \right) .
\]
For the choice $F'(p) = \ln p$, we obtain
\begin{align*}
\dot V(p) &= - k \, D \left( \ln p_1 + \ln p_4 - \ln p_2 - \ln p_3 \right) \\
&= - k \, (p_1 p_4 - p_2 p_3) \left( \ln (p_1 p_4) - \ln (p_2 p_3) \right) \leq 0
\end{align*}
due to the monotonicity of the logarithm.
This shows that $V(p)$, with the convex function $F(p) = p \ln p - p$, is a global Lyapunov function.

The main object of this paper is to study the general recombination
model in continuous time, with an arbitrary number of genetic loci and arbitrary numbers of alleles at each locus.
We will see that this leads to a chemical reaction network which is reversible,
satisfies the Wegscheider conditions~\cite{Wegscheider1901} (since the rate constants of a reaction and its reverse coincide),
and is detailed-balancing.
A generalization of the above entropy-like Lyapunov function allows us to prove global stability.

In population genetics, this general recombination model was studied (in discrete time) by Geiringer~\cite{geiringer1944},
and further by~\cite{shah1979, nagylaki1993, nhb1999, buerger2000}.
Their proofs use linkage disequilibrium functions, induction on the number of loci, cumulants, etc., and are far from easy.
Simpler proofs based on the entropy as Lyapunov function were independently given by Akin~\cite{akin1979} (in continuous time)
and Lyubich~\cite{lyubich1992} (in discrete time).


\section{Mathematical model}

{\bf Notation:} We denote the positive real numbers by $\Rp$ and the non-negative real numbers by $\Rnn$.
For a finite index set $I$, we write $\R^I$ for the real vector space of formal sums $x = \sum_{i \in I} x_i \, i$
with $x_i \in \R$.
Viewing the elements of $I$ as indicator functions,
a vector $x \in \R^I$ can be seen as a function $x \colon I \to \R$, and $x(i) = x_i$.
For $x,y \in \Rnn^I$, we define $x^y \in \Rnn$ as $\prod_{i \in I} x_i^{y_i}$, where we set $0^0=1$.
Given a matrix $Y \in \R^{I \times J}$, we denote by $Y^j \in \R^I$ the column vector indexed by $j \in J$.
For $x \in \Rnn^I$ and $Y \in \Rnn^{I \times J}$, we define $x^Y \in \Rnn^J$ as
$(x^Y)_j = x^{Y^j} = \prod_{i \in I} x_i^{Y_{ij}}$
for $j \in J$.

\subsection{Genetic recombination}

We consider a finite set of loci $\LLL$ with $L = |\LLL| \ge 1$,
finite sets of alleles $\AAA_i$ with $A_i = |\AAA_i| \ge 2$ for $i \in \LLL$,
the resulting set of gametes
\[
\GGG = \AAA_1 \times \ldots \times \AAA_L ,
\]
and the set of recombination patterns
\[
\PPP = \{ \{I,J\} \mid I \subseteq \LLL, \, J = \LLL \setminus I \} .
\]
In a recombination following pattern $\{I,J\}$,
alleles at loci $I$ get separated from alleles at loci $J$.
There are $|\GGG| = \prod_{i \in \LLL} A_i$ gametes
and $|\PPP|= 2^{L-1}$ recombination patterns,
including the trivial recombination $\{\emptyset,\LLL\}$.

Further,
we introduce the distribution of gamete frequencies
\[
p \colon \GGG \to \Rnn
\]
and the distribution of recombination rate constants
\[
c \colon \PPP \to \Rnn .
\]
We identify the function $p \colon \GGG \to \Rnn$ with the vector $p \in \Rnn^\GGG$ and write $p = \sum_{g \in \GGG} p(g) \, g$.
Usually, we are interested in elements of the simplex
\[
S_\GGG = \{ p \in \Rnn^\GGG \mid \ts \sum_{g \in \GGG} p(g) = 1\} .
\]

For gametes $g,h \in \GGG$ and a recombination pattern $\{I,J\} \in \PPP$,
we define $g_I h_J \in \GGG$ as
\[
(g_I h_J)_i =
\begin{cases}
g_i, \text{ if } i \in I, \\
h_i, \text{ if } i \in J
\end{cases}
\]
and the resulting recombination as
\begin{equation} \label{eq:recombination}
\{g,h\}
\overset{c(\{I,J\})}{\rightsquigarrow}
\{ g_I h_J , g_J h_I \}
\end{equation}
with rate constant $c(\{I,J\})$.
For $g_I h_J, \, g_J h_I \in \GGG$ and $\{I,J\} \in \PPP$,
we find
\[
(g_I h_J)_I (g_J h_I)_J = g \quad \text{and} \quad  (g_I h_J)_J (g_J h_I)_I = h
\]
and obtain the reverse recombination
\[
\{ g_I h_J , g_J h_I \}
\overset{c(\{I,J\})}{\rightsquigarrow}
\{g,h\}
\]
which occurs with the same rate constant.

Clearly, recombination \eqref{eq:recombination} causes a change in gamete frequencies
only if $\{g,h\} \neq \{g_I h_J,g_J h_I\}$ and $c(\{I,J\}) > 0$.
Further, different recombination patterns may give rise to the same recombination (with different rate constants, in general).
In order to view recombination as a chemical reaction,
we have to ensure inequality of left- and right-hand sides and positivity of rate constants.
Moreover, we have to sum over the rate constants of all contributing recombination patterns
which can be seen as reaction mechanisms.

\subsection{Chemical reactions} \label{subsec:chemreac}

Let $K \subseteq \LLL$ be a subset of loci.
The recombination pattern $\{I,J\} \in \PPP$ induces the subpattern $\{I,J\}_K \in \PPP_K$
where $\{I,J\}_K = \{I \cap K, J \cap K \}$ and
\[
\PPP_K = \{ \{I,J\} \mid I \subseteq K, \, J=K \setminus I \} .
\]
We write $\{I,J\} \ge \{I,J\}_K$ and, for simplicity, $\PPP_K^* = \PPP_K \setminus \{ \{ \emptyset, K \} \}$.
The set of all recombination subpatterns amounts to
\[
\cum{\PPP} = \bigcup_{K \subseteq \LLL} \PPP_K ,
\]
and we introduce the distribution of cumulative recombination rate constants
\begin{align*}
\cum{c} \colon & \cum{\PPP} \to \Rnn , \\
& \{I,J\} \mapsto \sum_{\substack{ \{I',J'\} \in \PPP : \\ \{I',J'\} \ge \{I,J\} }} c(\{I',J'\}) .
\end{align*}
For a subpattern $\{I,J\}$ with $J=K \setminus I$ and $K \subseteq \LLL$,
the cumulative rate constant $\cum{c}(\{I,J\})$ 
sums over all patterns which agree with $\{I,J\}$ on~$K$.

An important parameter is the cumulative rate constant for the recombination subpattern $\{ \{i\},\{j\} \}$,
that is, for the case that an allele at locus $i$ gets separated from an allele at locus $j$.
We assume that $\cum{c}(\{ \{i\},\{j\} \})>0$ for all pairs of loci $i,j$.
Otherwise, the two loci can be identified.

To explicitly state a chemical reaction
arising from a recombination pattern and a pair of gametes,
we introduce the set $\Delta(g,h) = \{ i \in \LLL \mid g_i \neq h_i \}$ for gametes $g,h \in \GGG$.
In genetic terms, $g$ and $h$ are heterozygous at the subset of loci $\Delta(g,h)$ and homozygous otherwise.

Now, gametes $g,h \in \GGG$ and a recombination pattern $\{I,J\} \in \PPP$ give rise to a reaction mechanism,
only if $|\Delta(g,h)| \ge 2$, $\{I,J\}_{\Delta(g,h)} \neq \{\emptyset,\Delta(g,h)\}$, and $c(\{I,J\}) > 0$.
In other words, only if the gametes are heterozygous at two or more loci,
if the subpattern induced on these loci is non-trivial,
and if the recombination rate constant is non-zero.
Every pattern $\{I',J'\} \in \PPP$ with $\{I',J'\} \ge \{I,J\}_{\Delta(g,h)}$ and $c(\{I',J'\}) > 0$
gives rise to a mechanism for the same reaction,
that is, to the same recombination (with different rate constant, in general).
The effect of all such patterns is summarized in the chemical reaction
\begin{equation} \label{eq:reaction}
g + h
\overset{k}{\to}
g_I h_J + g_J h_I
\end{equation}
with rate constant $k \equiv k(g + h \to g_I h_J + g_J h_I) = \cum{c}(\{I,J\}_{\Delta(g,h)}) > 0$.
Note that $g + h$ stands for $\{g,h\}$ such that $g + h$ equals $h + g$.
The reverse reaction
\[
g_I h_J + g_J h_I
\overset{k}{\to}
g + h
\]
occurs with the same rate constant.

\subsection{Reaction networks}

A chemical reaction network $(\SSS,\CCC,\RRR)$ consists of three finite sets:
a set $\SSS$ of species, a set $\CCC \subset \Rnn^\SSS$ of complexes,
and a set $\RRR \subset \CCC \times \CCC$ of reactions.
Complexes are the left- and right-hand sides of reactions.
A complex $y \in \CCC$ can be seen as a formal sum of species $y = \sum_{s \in \SSS} y_s \, s$,
where $y_s$ is the stoichiometric coefficient of species $s$.
For a reaction $(y,y') \in \RRR$, we write $y \to y'$.
It is required that each complex appears in at least one reaction
and that there are no reactions of the form $y \to y$.

A chemical reaction network $(\SSS,\CCC,\RRR)$
together with a vector of rate constants $k \in \Rp^\RRR$
gives rise to a weighted directed graph with complexes as nodes, reactions as edges, and rate constants as labels.
The connected components of this graph are called linkage classes.
(Note that linkage classes have nothing to do with genetic linkage.)
A network is called weakly reversible
if every component is strongly connected,
that is, if there exists a directed path from each node to every other node in the component.

In the process of genetic recombination, the reacting species are the gametes, that is, $\SSS=\GGG$.
Every complex $g+h$ is a formal sum (with stoichiometric coefficients equal to one)
of two gametes $g$ and~$h$, which differ at two or more loci,
and every reaction $g+h \to g_I h_J + g_J h_I$ arises from a pair of gametes and a recombination pattern $\{I,J\}$,
under the conditions specified in the previous subsection.
The set of all chemical reactions (with corresponding rate constants) amounts to
\begin{align} \label{eq:R}
\RRR = \Big\{ g+h \overset{k}{\to} g_I h_J + g_J h_I \, \Big| & \,
g,h \in \GGG , \, \{I,J\} \in \PPP\text{ with } |\Delta(g,h)| \ge 2, \\ & \nonumber
\{I,J\}_{\Delta(g,h)} \neq \{\emptyset,\Delta(g,h)\}, \text{ and} \\& \nonumber
k \equiv \cum{c}(\{I,J\}_{\Delta(g,h)}) > 0 \Big\} .
\end{align}
For each reaction $y \to y'$ we have its reverse $y' \to y$, and both occur with the same rate constant.
Hence, we can combine them in the reversible reaction $y \lr y'$, which we identify with $y' \lr y$,
and write $k(y \lr y')$ for $k(y \to y') = k(y' \to y)$.
From~\eqref{eq:R},
we obtain the set of all reversible reactions
\begin{equation} \label{eq:RR}
\RRR_\lr = \{ y \overset{k}{\lr} y' \mid (y \overset{k}{\to} y') \in \RRR \}
\end{equation}
and the set of all complexes
\begin{equation} \label{eq:C}
\CCC = \{ y \mid (y \to y') \in \RRR \} .
\end{equation}

In the examples and schemes below, we determine the set of all (reversible) reactions in another way.
We first iterate over subsets of two or more loci
and then over non-trivial subpatterns on these loci:
\begin{align*} 
\RRR_\lr = \Big\{ g+h \overset{k}{\lr} g_I h_J + g_J h_I \, \Big| & \,
K \subseteq \LLL \text{ with } |K| \ge 2, \\ &
g,h \in \GGG \text{ with } |\Delta(g,h)| = K, \, \\ &
\{I,J\} \in \PPP_{\Delta(g,h)}^* \text{ with } k \equiv \cum{c}(\{I,J\}) > 0 \Big\} .
\end{align*}
Thereby, we extend the definition of $g_I h_J$ to the subpattern $\{I,J\} \in \PPP_{\Delta(g,h)}$
in the obvious way:
$(g_I h_J)_i = g_i$ for $i \in I$,
$(g_I h_J)_i = h_i$ for $i \in J$, and
$(g_I h_J)_i = g_i = h_i$ for $i \in \LLL \setminus (I \cup J)$.

Finally, we consider the graph arising from the reaction network $(\SSS,\CCC,\RRR)$,
in particular, its linkage classes.
We observe that species (gametes) consist of alleles
and complexes (pairs of gametes) contain two alleles at each locus.
Since reactions separate alleles, but do not consume or produce them,
only complexes which contain the same alleles are connected by a reaction.
Moreover, if complexes $g+h$ and $g'+h'$ are connected by a reaction
then $\Delta(g,h) = \Delta(g',h')$,
and every subpattern $\{I,J\} \in \PPP_{\Delta(g,h)}^*$ which gives rise to a reaction involving $g+h$
gives rise to a reaction involving $g'+h'$, and vice versa.
Hence every linkage class is a symmetric graph.
If no reaction is precluded by a zero rate constant,
then every linkage class is a complete graph,
characterized by two (possibly identical) alleles at each locus. 

\subsection{Examples and schemes}

We consider examples of genetic recombination for small numbers of loci and alleles
and depict the corresponding chemical reaction networks as graphs.
Further, we present schemes for arbitrary numbers of loci
and compute the resulting numbers of linkage classes, complexes, and reversible reactions.
For simplicity, we assume that no reaction is precluded by a zero rate constant.
In this case, all linkage classes are complete graphs.

Instead of $c(\{I,J\})$ we write $c(I)$ and further omit the set brackets,
e.g., $c(\{\{1\},\LLL \setminus \{1\}\}) \equiv c(\{1\}) \equiv c(1)$.

\begin{exa}[$L=2$ loci with $A_1=A_2=2$ alleles]
\[
\xymatrix{
{\begin{vmatrix}0\\0\end{vmatrix}+\begin{vmatrix}1\\1\end{vmatrix}}
\ar@{<=>}[r]^{c(1)} &
{\begin{vmatrix}1\\0\end{vmatrix}+\begin{vmatrix}0\\1\end{vmatrix}}
}
\]
The graph has $l=1$ linkage class, $m=2$ complexes, and $r=1$ reversible reaction.
\end{exa}

\begin{exa}[$L=3$ loci with $A_1=A_2=A_3=2$ alleles]
\[
\xymatrix{
{\begin{vmatrix}0\\0\\0\end{vmatrix}+\begin{vmatrix}1\\1\\0\end{vmatrix}}
\ar@{<=>}[rr]^{c(1)+c(2)} & &
{\begin{vmatrix}1\\0\\0\end{vmatrix}+\begin{vmatrix}0\\1\\0\end{vmatrix}}
}
\quad , \quad
\xymatrix{
{\begin{vmatrix}0\\0\\1\end{vmatrix}+\begin{vmatrix}1\\1\\1\end{vmatrix}}
\ar@{<=>}[rr]^{c(1)+c(2)} & &
{\begin{vmatrix}1\\0\\1\end{vmatrix}+\begin{vmatrix}0\\1\\1\end{vmatrix}}
}
\]
\[
\xymatrix{
{\begin{vmatrix}0\\0\\0\end{vmatrix}+\begin{vmatrix}1\\0\\1\end{vmatrix}}
\ar@{<=>}[rr]^{c(1)+c(3)} & &
{\begin{vmatrix}1\\0\\0\end{vmatrix}+\begin{vmatrix}0\\0\\1\end{vmatrix}}
}
\quad , \quad
\xymatrix{
{\begin{vmatrix}0\\1\\0\end{vmatrix}+\begin{vmatrix}1\\1\\1\end{vmatrix}}
\ar@{<=>}[rr]^{c(1)+c(3)} & &
{\begin{vmatrix}1\\1\\0\end{vmatrix}+\begin{vmatrix}0\\1\\1\end{vmatrix}}
}
\]
\[
\xymatrix{
{\begin{vmatrix}0\\0\\0\end{vmatrix}+\begin{vmatrix}0\\1\\1\end{vmatrix}}
\ar@{<=>}[rr]^{c(2)+c(3)} & &
{\begin{vmatrix}0\\1\\0\end{vmatrix}+\begin{vmatrix}0\\0\\1\end{vmatrix}}
}
\quad , \quad
\xymatrix{
{\begin{vmatrix}1\\0\\0\end{vmatrix}+\begin{vmatrix}1\\1\\1\end{vmatrix}}
\ar@{<=>}[rr]^{c(2)+c(3)} & &
{\begin{vmatrix}1\\0\\1\end{vmatrix}+\begin{vmatrix}1\\1\\0\end{vmatrix}}
}
\]
\[
\xymatrix{
{\begin{vmatrix}0\\1\\0\end{vmatrix}+\begin{vmatrix}1\\0\\1\end{vmatrix}}
\ar@{<=>}[r]^{c(1)} \ar@{<=>}[d]_{c(2)} \ar@{<=>}[rd]^(0.35){c(3)} &
{\begin{vmatrix}1\\1\\0\end{vmatrix}+\begin{vmatrix}0\\0\\1\end{vmatrix}}
\ar@{<=>}[d]^{c(2)} \ar@{<=>}[ld]^(0.65){c(3)} \\
{\begin{vmatrix}0\\0\\0\end{vmatrix}+\begin{vmatrix}1\\1\\1\end{vmatrix}}
\ar@{<=>}[r]_{c(1)} &
{\begin{vmatrix}1\\0\\0\end{vmatrix}+\begin{vmatrix}0\\1\\1\end{vmatrix}}
}
\]
The graph has $l=7$ linkage classes, $m=16$ complexes, and $r=12$ reversible reactions.
The last class has $2^{L-1}=4$ complexes and $\binom{2^{L-1}}{2} = \binom{4}{2} = 6$ reactions.
\end{exa}

\begin{sch}[$L\ge2$ loci with $A_i=2$ alleles, $i=1,\ldots,L$]
\begin{align*}
l &= \sum_{k=2}^L \binom{L}{k} \, 2^{L-k} \\
&= 3^L - (2^L + L \, 2^{L-1}) \\
&= 3^L - 2^{L-1} (2+L)
\end{align*}
\begin{align*}
m &= \sum_{k=2}^L \binom{L}{k} \, 2^{L-k} \, 2^{k-1} \\
&= \ts 2^{L-1} \sum_{k=2}^L \binom{L}{k} \\
&= 2^{L-1} (2^L - (1+L))
\end{align*}
\begin{align*}
r &= \sum_{k=2}^L \binom{L}{k} \, 2^{L-k} \, \binom{2^{k-1}}{2} \\
&= \ts \sum_{k=2}^L \binom{L}{k} \, 2^{L-k} \, 2^{k-1} (2^{k-1} - 1) \, 2^{-1} \\
&= \ts \sum_{k=2}^L \binom{L}{k} \, (2^{L-3} \, 2^k - 2^{L-2}) \\
&= 2^{L-3} (3^L - (1 + L \, 2)) - 2^{L-2} (2^L - (1+L)) \\
&= 2^{L-3} (3^L - 1) - 2^{L-2} (2^L - 1) 
\end{align*}
\end{sch}

\begin{exa}[$L=2$ loci with $A_1=2$ and $A_2=3$ alleles, respectively]
\begin{gather*}
\xymatrix{
{\begin{vmatrix}0\\0\end{vmatrix}+\begin{vmatrix}1\\1\end{vmatrix}}
\ar@{<=>}[r]^{c(1)} &
{\begin{vmatrix}1\\0\end{vmatrix}+\begin{vmatrix}0\\1\end{vmatrix}}
} \\
\xymatrix{
{\begin{vmatrix}0\\0\end{vmatrix}+\begin{vmatrix}1\\2\end{vmatrix}}
\ar@{<=>}[r]^{c(1)} &
{\begin{vmatrix}1\\0\end{vmatrix}+\begin{vmatrix}0\\2\end{vmatrix}}
} \\
\xymatrix{
{\begin{vmatrix}0\\1\end{vmatrix}+\begin{vmatrix}1\\2\end{vmatrix}}
\ar@{<=>}[r]^{c(1)} &
{\begin{vmatrix}1\\1\end{vmatrix}+\begin{vmatrix}0\\2\end{vmatrix}}
}
\end{gather*}
The graph has $l=3$ linkage classes, $m=6$ complexes, and $r=3$ reversible reactions.
\end{exa}

\begin{sch}[$L\ge2$ loci with $A_i\ge2$ alleles, $i=1,\ldots,L$]
\begin{align*}
l &= \sum_{K \subseteq \LLL : |K|\ge2} \,
\prod_{i \in K} \binom{A_i}{2} \prod_{i \in \LLL \setminus K} A_i \\
m &= \sum_{K \subseteq \LLL : |K|\ge2} \,
\prod_{i \in K} \binom{A_i}{2} \prod_{i \in \LLL \setminus K} A_i \,\, 2^{|K|-1} \\
r &= \sum_{K \subseteq \LLL : |K|\ge2} \,
\prod_{i \in K} \binom{A_i}{2} \prod_{i \in \LLL \setminus K} A_i \, \binom{2^{|K|-1}}{2} 
\end{align*}
\end{sch}


\subsection{Dynamics}

Recombination~\eqref{eq:recombination} causes a change in gamete frequencies proportional to $g_I h_J + g_J h_I - g - h$
at the rate $c(\{I,J\}) \, p(g) \, p(h)$
determined by the recombination rate constant times the frequencies of the ``reacting'' gametes.
We formulate the dynamical system for the vector $p \in \Rnn^\GGG$ of all gamete frequencies,
that is, for $p = \sum_{g \in \GGG} p(g) \, g$,
by summing over all recombination partners and patterns:
\begin{equation} \label{eq:dynIJ}
\dd{p}{t} = \frac{1}{2} \sum_{g,h \in \GGG} \, \sum_{\{I,J\} \in \PPP} c(\{I,J\}) \, p(g) \, p(h) \left( g_I h_J + g_J h_I - g - h \right) .
\end{equation}
The contribution of a particular recombination~\eqref{eq:recombination} is identically zero
if $\{g,h\} = \{g_I h_J,g_J h_I\}$ or $c(\{I,J\})=0$.
On the other hand,
different recombination patterns may yield the same recombination (except for the rate constants).
As detailed in Subsection~\ref{subsec:chemreac},
the effect of all patterns causing recombination~\eqref{eq:recombination}
can be summarized in the chemical reaction~\eqref{eq:reaction},
provided that the recombination is effective and the cumulative rate constant is positive.
Reaction~\eqref{eq:reaction} occurs at the rate $\cum{c}(\{I,J\}_{\Delta(g,h)}) \, p(g) \, p(h)$.
In chemical terms,
it follows mass-action-kinetics.

\subsubsection*{Mass-action kinetics}

Let $(\SSS,\CCC,\RRR)$ be a chemical reaction network
and $k \in \Rp^\RRR$ a vector of rate constants.
Under the assumption of mass-action kinetics,
the rate of a reaction $(y \to y') \in \RRR$, which depends on the species concentrations $x \in \Rnn^\SSS$,
is given by $k(y \to y') \, x^y$,
that is,
by a monomial in the reactant concentrations with the corresponding stoichiometric coefficients as exponents.                                           

Hence, we obtain a dynamical system equivalent to~\eqref{eq:dynIJ},
by summing over all reactions~\eqref{eq:R} and assuming mass-action kinetics:
\begin{equation} \label{eq:dyn}
\dd{p}{t} = \sum_{(g+h \to g'+h') \in \RRR} k(g+h \to g'+h') \, p(g) \, p(h)
\left( g' + h' - g - h \right) .
\end{equation}
The right-hand side of~\eqref{eq:dyn} can be written as a product of the stoichiometric matrix
$N \in \R^{\GGG \times \RRR}$ and the rate vector $v_k(p) \in \Rnn^\RRR$.
Thereby, the column vector of $N$ indexed by $(g+h \to g'+h') \in \RRR$ is given by $(g' + h' - g - h) \in \R^\GGG$
and the component of $v_k(p)$ indexed by $g+h \to g'+h'$ is given by $k(g+h \to g'+h') \, p(g) \, p(h)$.
Hence,
\begin{equation} \label{eq:dynNv}
\dd{p}{t} = N v_k(p) .
\end{equation}

\subsubsection*{Complex balancing}

The right-hand side of the dynamical system~\eqref{eq:dyn} can also be written as a product of the complex matrix $Y \in \R^{\GGG \times \CCC}$,
the Laplacian matrix $A_k \in \R^{\CCC \times \CCC}$ of the weighted directed graph,
and the vector of monomials $p^Y \in \R^\CCC$.
The column vector of $Y$ indexed by $y \in \CCC$ is given by $y \in \Rnn^\GGG$ itself,
that is, $Y^y = y$,
and $A_k$ is defined as follows:
$(A_k)_{y'y} = k_{y \to y'}$ if $(y \to y') \in \RRR$,
$(A_k)_{yy} = - \sum_{(y \to y') \in \RRR} k_{y \to y'}$,
and $(A_k)_{y'y} = 0$ otherwise.
We obtain
\begin{equation} \label{eq:dynYA}
\dd{p}{t} = Y A_k \, p^Y 
\end{equation}
Recall that $(p^Y)_y = p^{Y^y} = p^y$ for $y \in \CCC$.
For a particular complex $y=g+h$, we have $p^y = p^{g+h} = p(g) \, p(h)$.

An equilibrium of~\eqref{eq:dynYA} is called complex-balancing if 
$A_k \, p^Y = 0$.
That is, if at each complex the rates of all reactions sum up to zero.

\subsubsection*{Detailed balancing}

In the process of genetic recombination, all reactions are reversible.
Moreover, the rate constants of a reaction and its reverse coincide.
Hence, we obtain a dynamical system equivalent to~\eqref{eq:dyn},
by summing over all reversible reactions~\eqref{eq:RR}:
\small
\begin{equation} \label{eq:dynrev}
\dd{p}{t} =
\sum_{(g+h \lr g'+h') \in \RRR_\lr} k(g+h \lr g'+h')
\left( p(g) \, p(h) - p(g') \, p(h') \right)
\left( g' + h' - g - h \right) .
\end{equation}
\normalsize
An equilibrium of \eqref{eq:dynrev} is called detailed-balancing
if $p(g) \, p(h) = p(g') \, p(h')$ for all $(g+h \lr g'+h') \in \RRR_\lr$.
In general, an equilibrium of a reversible reaction network is called detailed-balancing
if the rates of each reaction and its reverse coincide.
Clearly, every detailed-balancing equilibrium is complex-balancing.

\subsection{Conserved quantities}

The change over time \eqref{eq:dynNv} lies in a subspace of $\R^\GGG$,
and every trajectory in~$\Rnn^\GGG$ lies in a coset of this subspace. 
We define the stoichiometric subspace
\[
S = \im N
\]
and the stoichiometric compatibility classes
\[
S(p) = (p + S) \cap \Rnn^\GGG 
\]
for $p \in \Rnn^\GGG$.
Every stoichiometric class is characterized by its orthogonal projection on $S^\bot = (\im N)^\bot = \ker N^T$,
that is, by a vector of conserved quantities.
For $u \in S^\bot$, that is, $\con^T N =  0$, we have
\[
\dd{(\con^T p)}{t} = 0 ,
\]
that is, $\con^T p = \textit{const}$.

We observe that the vector $\one \equiv 1^\GGG = \sum_{g \in \GGG} g$ is orthogonal to all columns of~$N$:
$\one^T (g' + h' - g - h) = 0$ for all $(g+h \to g'+h') \in \RRR$,
that is, $\one^T N = 0$.
Since $\one^T p = \sum_{g \in \GGG} p(g)$, we have
\[
\dd{(\sum_{g \in \GGG} p(g))}{t} = 0 ,
\]
and, as one consequence, the simplex $S_\GGG$ is invariant.

Further,
we consider for each locus and each allele at this locus the subset of gametes which contain this allele
and define the corresponding formal sum of gametes
\[
\con_i(a) = \sum_{g \in \GGG:g_i = a} g \quad \text{for } i \in \LLL \text{ and } a \in \AAA_i ,
\]
where $\con_i(a) \in \{0,1\}^\GGG$.
As already mentioned, only complexes which contain the same alleles are connected by a reaction.
Hence, $\con_i(a)^T (g' + h' - g - h) = 0$ for all $(g+h \to g'+h') \in \RRR$,
that is,
$\con_i(a)^T N = 0$,
and the marginal frequencies
\[
p_i(a) = \con_i(a)^T p = \sum_{g \in \GGG:g_i = a} p(g)
\]
are conserved quantities,
that is,
\[
\dd{p_i(a)}{t} = 0 .
\]
For each $i \in \LLL$, we have
\[
\sum_{a \in \AAA_i} \con_i(a) = \sum_{g \in \GGG} g 
\]
and
\[
\sum_{a \in \AAA_i} p_i(a) = \sum_{g \in \GGG} p(g) .
\]
Hence, there are at least $1 + \sum_{i \in \LLL} (A_i - 1)$ linearly independent vectors in~$S^\bot$
and as many independent marginals.

We define the marginal compatibility classes
\[
M(p) = \{ p' \in \Rnn^\GGG \mid p'_i(a) = p_i(a) \text{ for } i \in \LLL \text{ and } a \in \AAA_i \}
\]
for $p \in \Rnn^\GGG$.
Clearly, $S(p) \subseteq M(p)$.


\section{Results}

We determine the equilibria for the process of genetic recombination
and prove convergence to a unique equilibrium.

First, we rewrite the dynamical system~\eqref{eq:dynIJ}.
Using the symmetry in the double sum over recombination partners, we obtain
\begin{align*}
\dd{p}{t} &= \sum_{g,h \in \GGG} \, \sum_{\{I,J\} \in \PPP} c(\{I,J\}) \, p(g) \, p(h) \left( g_I h_J - g \right) \\
&= \sum_{\{I,J\} \in \PPP} c(\{I,J\}) \left( \sum_{g,h \in \GGG} p(g) \, p(h) \, g_I h_J - p \right) .
\end{align*}
Thereby, we assumed $\sum_{g \in \GGG} p(g) = 1$, that is, $p \in S_\GGG$.

For $K \subseteq \LLL$, we define the set of subgametes $\GGG_K = \prod_{i \in K} \AAA_i$,
the projection $\GGG \to \GGG_K, \, g \mapsto g_K$, where $(g_K)_i = g_i$ for $i \in K$,
and its linear extension to the corresponding vector spaces:
$\Rnn^\GGG \to \Rnn^{\GGG_K}, \, p \mapsto p_K$,
where $p_K = \sum_{g \in \GGG} p(g) \, g_K$, that is, $p_K(g_K) = \sum_{h \in \GGG:h_K = g_K} p(h)$.
If $K = \{i\}$ with $i \in \LLL$,
we recover the marginal frequencies
$p_i = \sum_{g \in \GGG} p(g) \, g_i$, that is, $p_i(g_i) = \sum_{h \in \GGG:h_i = g_i} p(h)$.

Let $\{I,J\} \in \PPP$.
For $g \in \GGG_I$ and $h \in \GGG_J$, we define $gh \in \GGG$ as $(gh)_i = g_i$ for $i \in I$ and $(gh)_i = h_i$ for $i \in J$
and extend the multiplication $\GGG_I \times \GGG_J \to \GGG$ linearly to $\Rnn^{\GGG_I} \times \Rnn^{\GGG_J} \to \Rnn^\GGG$.
Hence, we write
\begin{align} \label{eq:dynIJvar}
\dd{p}{t} &= \sum_{\{I,J\} \in \PPP} c(\{I,J\}) \left( \sum_{g \in \GGG} p(g) \, g_I \sum_{h \in \GGG} p(h) \, h_J - p \right) \\
&= \sum_{\{I,J\} \in \PPP} c(\{I,J\}) \left( p_I \, p_J - p \right) . \nonumber
\end{align}
In fact, we may sum over $\{I,J\} \in \PPP^*$ since the contribution of $\{\emptyset,\LLL\}$ is identically zero.

The projection of a trajectory of the dynamical system
is the trajectory of a projected dynamical system with the same structure:
For $K \subseteq \LLL$ and $\{I',J'\}=\{I,J\}_K \in \PPP_K$, we find
\[
(p_I \, p_J)_K = \sum_{g,h \in \GGG} p(g) \, p(g) \, (g_I h_J)_K = \sum_{g,h \in \GGG} p(g) \, p(g) \, g_{I'} h_{J'} = p_{I'} \, p_{J'}
\]
and hence
\begin{align*}
\dd{p_K}{t} &= \sum_{\{I',J'\} \in \PPP_K} \sum_{\substack{ \{I,J\} \in \PPP: \\ \{I,J\} \ge \{I',J'\}}}
c(\{I,J\}) \left( (p_I \, p_J)_K - p_K \right) \\
&= \sum_{\{I',J'\} \in \PPP_K} \cum{c}(\{I',J'\}) \left( p_{I'} \, p_{J'} - p_K \right) .
\end{align*}

\subsection{Equilibria}

Now, we are in a position to characterize the equilibria on the simplex.
\begin{lem} \label{lem1}
For all recombination rate constants,
$p \in S_\GGG$ is an equilibrium of the dynamical system~\eqref{eq:dynIJ} if and only if
\begin{equation} \label{eq:equ}
p = \prod_{i \in \LLL} p_i ,
\quad \text{that is,} \quad
p(g) = \prod_{i \in \LLL} p_i(g_i) .
\end{equation}
\end{lem}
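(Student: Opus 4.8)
The plan is to work from the rewritten form of the dynamics~\eqref{eq:dynIJvar},
namely $\dot p = \sum_{\{I,J\} \in \PPP^*} c(\{I,J\})\,(p_I\,p_J - p)$, and to prove both implications of the claimed characterization.
The easy direction is the ``if'' part: assuming $p(g) = \prod_{i \in \LLL} p_i(g_i)$ for all $g \in \GGG$, I would check that for \emph{every} pattern $\{I,J\} \in \PPP$ the product $p_I\,p_J$ equals $p$.
Indeed, under the product assumption the marginal on any subset $K$ factorizes as $p_K(g_K) = \prod_{i \in K} p_i(g_i)$ (this follows by summing the product over the fibers of $\GGG \to \GGG_K$ and using $\sum_{a \in \AAA_i} p_i(a) = 1$ for $i \notin K$), so $(p_I\,p_J)(g) = p_I(g_I)\,p_J(g_J) = \prod_{i \in I} p_i(g_i)\prod_{i \in J}p_i(g_i) = \prod_{i \in \LLL} p_i(g_i) = p(g)$.
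Hence each summand in~\eqref{eq:dynIJvar} vanishes and $p$ is an equilibrium, regardless of the rate constants.

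For the ``only if'' direction, suppose $p \in S_\GGG$ is an equilibrium for \emph{all} rate constants; in particular it is an equilibrium for any single choice $c(\{I,J\}) = 1$ and all others zero, so $p_I\,p_J = p$ must hold for every non-trivial pattern $\{I,J\} \in \PPP^*$.
Now I would argue by induction on $L = |\LLL|$.
The base case $L = 2$ is exactly the single condition $p_{\{1\}}\,p_{\{2\}} = p$, i.e.\ $p(g) = p_1(g_1)\,p_2(g_2)$, which is the assertion.
For the inductive step with $L \ge 3$: pick any locus, say $\ell \in \LLL$, and apply the equilibrium condition to the pattern $\{\{\ell\}, \LLL \setminus \{\ell\}\}$, giving
\[
p(g) = p_\ell(g_\ell)\, p_{\LLL \setminus \{\ell\}}(g_{\LLL \setminus \{\ell\}}).
\]
Next I would observe that the projection $p_{\LLL \setminus \{\ell\}}$ is itself an equilibrium of the recombination dynamics on the $(L-1)$-locus system $\LLL \setminus \{\ell\}$ for all rate constants: this is precisely the content of the projected-dynamics computation just before Subsection~3.1, since every non-trivial subpattern on $\LLL \setminus \{\ell\}$ arises from some pattern on $\LLL$, and $\cum{c}$ can be made an arbitrary positive distribution on $\cum{\PPP}_{\LLL\setminus\{\ell\}}$ by choosing $c$ appropriately (using the assumption $\cum c(\{\{i\},\{j\}\}) > 0$, which guarantees we never need a pattern to be ``switched off''). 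By the inductive hypothesis, $p_{\LLL \setminus \{\ell\}}(h) = \prod_{i \in \LLL \setminus \{\ell\}} p_i(h_i)$, and the marginals of $p_{\LLL\setminus\{\ell\}}$ onto a single locus $i \neq \ell$ coincide with $p_i$. Substituting back gives $p(g) = p_\ell(g_\ell)\prod_{i \in \LLL \setminus \{\ell\}} p_i(g_i) = \prod_{i \in \LLL} p_i(g_i)$, completing the induction.

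The main obstacle I anticipate is the careful bookkeeping in the inductive step — specifically, verifying that the projected system on $\LLL \setminus \{\ell\}$ is again a bona fide recombination system whose rate constants can be chosen freely, so that the inductive hypothesis (``equilibrium for all rate constants'') genuinely applies.
One must check that the induced cumulative rate constants range over all of $\Rp^{\cum\PPP_{\LLL\setminus\{\ell\}}}$ as $c$ varies, which relies on the structure of the map $\{I,J\} \mapsto \{I,J\}_K$ and the standing positivity assumption on pairwise cumulative constants.
An alternative that sidesteps the induction entirely: from $p_I\,p_J = p$ for all $\{I,J\} \in \PPP^*$, one can try to derive the full factorization directly — e.g.\ by repeatedly splitting off one locus at a time using only the patterns $\{\{\ell\},\LLL\setminus\{\ell\}\}$ together with the consistency of marginals $ (p_{\LLL \setminus \{\ell\}})_i = p_i$ — but this is essentially the same argument dressed differently, and the inductive formulation is cleanest. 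I would also remark that the equilibrium condition for all rate constants is equivalent to the single-pattern conditions $p_I p_J = p$, so the characterization~\eqref{eq:equ} is independent of the specific (positive) rate constants, matching the phrasing ``for all recombination rate constants.''
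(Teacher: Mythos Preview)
Your ``if'' direction is fine and matches the paper's argument.

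The ``only if'' direction, however, has a genuine gap rooted in a misreading of the quantifier. The lemma asserts that \emph{for each fixed admissible choice of rate constants} $c$ (i.e.\ satisfying the standing assumption $\cum{c}(\{\{i\},\{j\}\})>0$ for all pairs), every equilibrium is a product. You instead assume the much stronger hypothesis that $p$ is an equilibrium \emph{simultaneously for all} $c$, and from this you extract the single-pattern identities $p_I\,p_J = p$ by specializing to $c$ supported on one pattern. There are two problems with this. First, such a specialized $c$ will in general violate the standing positivity assumption once $L\ge 3$ (a single pattern cannot separate every pair of loci), so you are not even staying inside the allowed class of rate constants. Second, and more fundamentally, what you prove --- ``equilibrium for all $c$ $\Rightarrow$ product'' --- is logically weaker than what is claimed --- ``for each $c$: equilibrium for $c$ $\Rightarrow$ product''. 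Your final remark that the two formulations coincide because ``the characterization is independent of the specific rate constants'' is circular: that independence is precisely the content of the lemma.

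The paper's argument avoids this by never isolating a single pattern. For a \emph{fixed} $c$, it projects $p$ to \emph{every} proper subset $K\subsetneq\LLL$; each projection is an equilibrium of the projected system (with the induced rate constants, which still satisfy the pairwise positivity assumption), so by the inductive hypothesis $p_K=\prod_{i\in K}p_i$. Applying this to both $I$ and $J$ for every $\{I,J\}\in\PPP^*$ gives $p_I\,p_J=\prod_{i\in\LLL}p_i$ for all non-trivial patterns. Only then is the equilibrium equation used, now in the form
\[
0=\sum_{\{I,J\}\in\PPP^*}c(\{I,J\})\Big(\prod_{i\in\LLL}p_i-p\Big)
=\Big(\sum_{\{I,J\}\in\PPP^*}c(\{I,J\})\Big)\Big(\prod_{i\in\LLL}p_i-p\Big),
\]
and since at least one $c(\{I,J\})>0$, the conclusion follows. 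The key difference from your scheme is that the paper projects to \emph{all} proper subsets to obtain $p_I\,p_J=\prod_i p_i$ uniformly, rather than trying to split off a single locus via an individual pattern identity that is not available for fixed~$c$.
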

\begin{proof}
We show that, if~\eqref{eq:equ}, then
\[
p_I = \prod_{i \in I} p_i .
\]
Indeed, for $\{I,J\} \in \PPP$ and $J = \{1, \ldots, |J|\}$, we find
\begin{align*}
p_I (g_I) &= \sum_{h_J \in \GGG_J} p(g_I h_J) \\
&= \sum_{h_1 \in \GGG_1} \ldots \sum_{h_\J \in \GGG_\J} \prod_{i \in I} p_i(g_i) \prod_{i \in J} p_i(h_i) \\
&= \prod_{i \in I} p_i(g_i) \sum_{h_1 \in \GGG_1} p_1(h_1) \; \ldots \sum_{h_\J \in \GGG_\J} p_\J(h_\J) \\
&= \prod_{i \in I} p_i(g_i) .
\end{align*}
Hence, 
\[
p_I \, p_J = \prod_{i \in I} p_i \, \prod_{i \in J} p_i = \prod_{i \in \LLL} p_i = p
\]
for all $\{I,J\} \in \PPP$,
and $p \in S_\GGG$ is an equilibrium of the dynamical system~\eqref{eq:dynIJvar}
equivalent to~\eqref{eq:dynIJ}.

It remains to show that $p \in S_\GGG$ is an equilibrium only if~\eqref{eq:equ}.
We proceed by induction on the number of loci:

For $\LLL=\{1\}$,
there is no non-trivial recombination.
Every $p \in S_\GGG$ is an equilibrium which coincides with its marginals:
$p(g) = p_1(g)$ for $g \in \GGG$,
that is, $p = p_1$.

For $L\ge2$,
we consider subsets of loci $K \subset \LLL$ with $|K|<L$.
The projection of an equilibrium $p \in S_\GGG$ of the dynamical system (with loci $\LLL$)
is an equilibrium of the projected dynamical system (with loci $K$).
By the induction hypothesis, $p_K = \prod_{i \in K} p_i$ and hence
\[
p_I \, p_J = \prod_{i \in I} p_i \, \prod_{i \in J} p_i = \prod_{i \in \LLL} p_i
\]
for all $\{I,J\} \in \PPP^*$.
Summing over $\{I,J\} \in \PPP^*$ in~\eqref{eq:dynIJvar}, we obtain
\[
0 = \sum_{\{I,J\} \in \PPP^*} c(\{I,J\}) \left( \prod_{i \in \LLL} p_i - p \right)
\]
and hence $p = \prod_{i \in \LLL} p_i$.
(There always exists a non-zero rate constant for some non-trivial recombination pattern.)
\end{proof}

If the dynamics is not restricted to the simplex,
then $p \in \Rnn^\GGG$ is an equilibrium if and only if
\[
p = \left( \ts \sum_{g \in \GGG} p(g) \right)^{1-L} \prod_{i \in \LLL} p_i .
\]

Clearly, every equilibrium is determined by its marginals,
and we have the following result.
\begin{pro} \label{pro1}
Every marginal compatibility class contains a unique equilibrium.
\end{pro}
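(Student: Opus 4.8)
The plan is to reduce everything to Lemma~\ref{lem1}, in the off-simplex form stated just above it, which already provides an explicit closed form for the equilibria; the proposition is then pure bookkeeping. Fix a marginal compatibility class $M(p)$ with $p \in \Rnn^\GGG$; it is nonempty by construction (it contains $p$). Let $p_i \in \Rnn^{\AAA_i}$ be the prescribed marginals and $s = \sum_{g \in \GGG} p(g)$ the total mass, which equals $\sum_{a \in \AAA_i} p_i(a)$ for every $i \in \LLL$ and is therefore an invariant of the class.

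For existence, I would exhibit an explicit candidate. If $s = 0$, then $p = 0$, the class is $M(p) = \{0\}$, and $0$ is trivially an equilibrium. If $s > 0$, set $q = s^{1-L} \prod_{i \in \LLL} p_i \in \Rnn^\GGG$. A one-line computation shows $q$ has the right marginals, $q_i(a) = s^{1-L} p_i(a) \prod_{j \neq i} \big( \sum_{b \in \AAA_j} p_j(b) \big) = s^{1-L} p_i(a)\, s^{L-1} = p_i(a)$, so $q \in M(p)$, and that $\sum_{g \in \GGG} q(g) = s^{1-L} \prod_{i \in \LLL} \sum_{a \in \AAA_i} p_i(a) = s$. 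Hence $q = \big( \sum_{g \in \GGG} q(g) \big)^{1-L} \prod_{i \in \LLL} q_i$, which by the off-simplex form of Lemma~\ref{lem1} is precisely the equilibrium condition.

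For uniqueness, I would invoke Lemma~\ref{lem1} in the converse direction: any equilibrium $q' \in M(p)$ satisfies $q' = \big( \sum_{g \in \GGG} q'(g) \big)^{1-L} \prod_{i \in \LLL} q'_i$. But $q'_i = p_i$ and $\sum_{g \in \GGG} q'(g) = \sum_{a \in \AAA_i} p_i(a) = s$ are fixed by the class, so $q' = s^{1-L} \prod_{i \in \LLL} p_i = q$ (and $q' = 0$ in the case $s = 0$). Thus the equilibrium in $M(p)$ is unique.

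I do not expect a genuine obstacle here, since all the dynamical content already sits in Lemma~\ref{lem1}; the only points needing care are the normalization by the total mass $s$ (so that the product formula reproduces the correct marginals) and the degenerate case $s = 0$. On the simplex $S_\GGG$ both issues vanish: there $q = \prod_{i \in \LLL} p_i$ is simply the unique product distribution with the given marginals, and the statement is immediate.
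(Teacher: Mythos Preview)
Your proposal is correct and follows the same route as the paper: the paper treats Proposition~\ref{pro1} as an immediate consequence of Lemma~\ref{lem1} and the off-simplex equilibrium formula, with the single sentence ``Clearly, every equilibrium is determined by its marginals.'' You have simply spelled out what the paper leaves implicit --- the verification that the product candidate $q = s^{1-L}\prod_i p_i$ actually lies in $M(p)$, the observation that the total mass $s$ is fixed by the marginals, and the degenerate case $s=0$ --- so there is no substantive difference in approach.
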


In chemical terms,
every equilibrium is detailed-balancing since
\[
p(g) \, p(h) = \prod_{i \in \LLL} p_i(g_i) \, p_i(h_i) = \prod_{i \in \LLL} p_i(g'_i) \, p_i(h'_i) = p(g') \, p(h')
\]
for all $(g+h \lr g'+h') \in \RRR_\lr$, cf.\ Equation~\eqref{eq:dynrev}.
Recall that only complexes which contain the same alleles are connected by a reaction.
In fact, we can derive the following result entirely in the chemical setting.
\begin{pro} \label{pro2}
Every stoichiometric compatibility class contains a unique positive equilibrium, which is detailed-balancing,
and no boundary equilibria.
\end{pro}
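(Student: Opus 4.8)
The plan is to leverage the general theory of complex-balancing (Horn--Jackson--Feinberg) together with the deficiency-zero structure of the recombination network. First I would note that the network $(\SSS,\CCC,\RRR)$ is reversible, hence weakly reversible, so by the theorem of Horn and Jackson it admits, for every choice of rate constants $k \in \Rp^\RRR$, a complex-balancing equilibrium in the interior of every stoichiometric compatibility class, and this equilibrium is unique within the class; moreover the associated entropy-like function $V(p) = \sum_{g \in \GGG} \big( p(g) \ln p(g) - p(g) \big)$ (relative to a reference complex-balancing equilibrium) is a strict Lyapunov function on the interior, ruling out positive equilibria other than the complex-balancing one. So the existence and uniqueness of an interior equilibrium, and its complex-balancing property, come essentially for free from the weak reversibility. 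The work is then to (i) upgrade complex-balancing to detailed-balancing, and (ii) exclude boundary equilibria.

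For (i), I would argue that since every reaction and its reverse carry the same rate constant (by construction, $k(y \to y') = k(y' \to y) = \cum{c}(\{I,J\}_{\Delta(g,h)})$), the Wegscheider conditions hold trivially: around any cycle in a linkage class the product of forward rate constants equals the product of backward rate constants, because each individual pair already matches. By the classical characterization (Wegscheider; see also Feinberg), a weakly reversible mass-action network satisfying the Wegscheider conditions is detailed-balancing, i.e. its complex-balancing equilibria are in fact detailed-balancing. Alternatively, and more directly given what is already proved in the paper: Lemma~\ref{lem1} (extended to $\Rnn^\GGG$ by the remark following it) shows that any positive equilibrium $p$ satisfies $p = (\sum_g p(g))^{1-L}\prod_{i\in\LLL} p_i$, and the computation displayed after Proposition~\ref{pro1} shows this forces $p(g)p(h) = p(g')p(h')$ for every reversible reaction $g+h \lr g'+h'$ — which is exactly detailed balancing. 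This simultaneously re-proves complex-balancing and uniqueness of the positive equilibrium within a stoichiometric class, since two detailed-balancing equilibria with the same marginals coincide, and $S(p) \subseteq M(p)$.

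For (ii), the exclusion of boundary equilibria, I would use the structure of the marginals as conserved quantities. Fix a stoichiometric class $S(p_0)$ with $p_0$ in the interior (so all marginals $p_{0,i}(a) > 0$). Suppose $p \in S(p_0) \cap \partial\Rnn^\GGG$ were an equilibrium, so $p(g^*) = 0$ for some gamete $g^*$. Because $p$ is an equilibrium, Lemma~\ref{lem1} applies to the dynamics projected onto any sub-locus set, and in particular one can run the induction on $L$: projecting to $K \subsetneq \LLL$ gives $p_K = \prod_{i\in K} p_i$, and then the same summation argument as in the proof of Lemma~\ref{lem1} (which only used existence of a positive rate constant, not positivity of $p$) yields $p = \prod_{i\in\LLL} p_i$ on the simplex, hence $p(g^*) = \prod_i p_i(g^*_i) = 0$ forces some marginal $p_i(g^*_i) = 0$. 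But marginals are conserved, so $p_i(g^*_i) = p_{0,i}(g^*_i) > 0$, a contradiction. Therefore $S(p_0)$ contains no boundary equilibria, and the unique equilibrium it contains is the detailed-balancing one.

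The main obstacle I anticipate is purely bookkeeping: making the projection/induction argument from Lemma~\ref{lem1} work cleanly for equilibria on the \emph{boundary} of the simplex (where the derivation $p_I = \prod_{i\in I} p_i$ needs $p$ to satisfy \eqref{eq:equ} on the projected system, and one must check the induction base case and the step survive without positivity). If one instead cites the general complex-balancing theory, the subtle point shifts to confirming that the recombination network has deficiency zero (or at least that all its complex-balancing equilibria are detailed-balancing), which again reduces to the Wegscheider condition check — straightforward here precisely because forward and reverse rate constants are equal. Either route, the conceptual content is light; the care needed is in the boundary analysis.
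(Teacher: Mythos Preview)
Your opening plan has a genuine error: the recombination network is \emph{not} deficiency zero. The paper computes (in its final remarks) that already for $L=3$ loci with two alleles each one has $\delta = 16 - 7 - 4 = 5$, and $\delta$ grows rapidly with $L$. Consequently, weak reversibility alone does not give you a complex-balancing equilibrium ``for free'' via Horn--Jackson--Feinberg; that conclusion requires either $\delta = 0$ or Horn's rate-constant conditions. Your fallback routes are sound, however, and one of them is essentially what the paper does: rather than invoke deficiency, the paper simply exhibits $p^* = 1^\GGG$ as a positive detailed-balancing equilibrium (from $p^N = 1$, equivalently the Wegscheider conditions, which hold trivially because forward and reverse rate constants coincide), and then quotes Horn--Jackson (\cite[Theorem~6A and Lemma~4B]{HornJackson1972}) to get uniqueness of the positive equilibrium in each stoichiometric class and coincidence of the detailed- and complex-balancing sets. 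Your alternative via Lemma~\ref{lem1} plus $S(p) \subseteq M(p)$ also works and is closer to the population-genetics argument.

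For the boundary part your approach differs from the paper's and is arguably cleaner. The paper proves a separate reachability lemma (Lemma~\ref{lem2}: every gamete is reachable from any set of gametes containing all alleles) and then invokes Vol'pert--Hudjaev's positivity theorem to conclude that any boundary point is immediately pushed into the interior, hence cannot be an equilibrium. You instead observe that the induction in Lemma~\ref{lem1} never uses positivity of $p$, so any boundary equilibrium on the simplex still satisfies $p = \prod_{i} p_i$; then $p(g^*) = 0$ forces some marginal $p_i(g^*_i) = 0$, contradicting conservation of marginals from an interior reference point. This avoids both Lemma~\ref{lem2} and the Vol'pert citation. The paper's route has the advantage of being formulated entirely in reaction-network language (consistent with its aim of giving a purely chemical proof), while yours recycles the genetic Lemma~\ref{lem1}; both require the same implicit hypothesis that all allele marginals in the class are positive, which you state explicitly and the paper asserts with a ``Note that''.
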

\begin{proof}
First, we determine the set of positive detailed-balancing equilibria.
Using positivity, we write the condition for detailed balancing,
\[
p(g) \, p(h) = p(g') \, p(h') \quad \text{for all } (g+h \lr g'+h') \in \RRR_\lr ,
\]
as
\[
p^{g'+h'-g-h} = 1 \quad \text{for all } (g+h \to g'+h') \in \RRR
\]
and even more abstractly as
\[
p^N = 1 ,
\]
where $N \in \R^{\GGG \times \RRR}$ is the stoichiometric matrix and $1 \equiv 1^{\!\RRR}$.
Clearly, the trivial solution is given by $p^* = 1 \equiv 1^\GGG$.
To determine all solutions,
we take the logarithm,
\[
N^T \ln p = 0 ,
\]
and note that $\ker N^T = (\im N)^\bot = S^\bot$ and $\dim(S^\bot) \ge 1$. 
Hence, we can write the set of positive detailed-balancing equilibria as
\[
Z = \{ p \in \Rp^\GGG \mid \ln p - \ln p^* \in S^\bot \} .
\]
Clearly, every detailed-balancing equilibrium is complex-balancing.
Now, if there exists a positive complex-balancing equilibrium $p^*$,
then the set of positive complex-balancing equilibria is given by $Z$
and there are no other positive equilibria \cite[Theorem 6A]{HornJackson1972}.
Moreover, every stoichiometric compatibility class contains a unique positive equilibrium \cite[Lemma 4B]{HornJackson1972}.
Hence, the sets of positive detailed- and complex-balancing equilibria coincide,
and every stoichiometric compatibility class contains a unique positive equilibrium, which is detailed-balancing.

It remains to preclude boundary equilibria.
We consider an arbitrary initial value $p \in \Rnn^\GGG$ on the boundary,
that is, $p(g) = 0$ for some $g \in \GGG$,
and define the set of gametes $\GGG_0 = \{ g \in \GGG \mid p(g) > 0 \}$.
Note that for each locus and each allele at this locus
there is a gamete which contains this allele and occurs with a positive frequency.
Further, recall that each pair of loci gets separated by some recombination pattern with positive rate constant.
By Lemma~\ref{lem2} below, the set of all gametes $\GGG$ is reachable from $\GGG_0$.
Now, let $p(t)$ be the solution of the dynamical system with $p(0)=p$.
By \cite[Theorem 2, p.\ 618]{Volpert1985}, $p(t) \in \Rp^\GGG$ for $t>0$, and hence $p$ is not an equilibrium.
\end{proof}

In the proof,
we have used fundamental results about complex balancing by Horn and Jackson~\cite{HornJackson1972}.
Necessary and sufficient conditions for complex balancing are given by Horn~\cite{Horn1972}
and for detailed balancing by Vol'pert and Hudjaev~\cite{Volpert1985} and Feinberg~\cite{Feinberg1989}.
For the relation between complex and detailed balancing,
see~\cite{DickensteinPerezMillan2011}.

In the proof of Proposition~\ref{pro2}, we have also used the purely graph-theoretical concept of reachability.
Let $\SSS$ and $\RRR$ be the species and reactions of a chemical reaction network,
and let $\SSS_0 \subseteq \SSS$ be a set of species.
Iteratively, we define
\[
\SSS_i = \SSS_{i-1} \cup \{ g' \mid g,h \in \SSS_{i-1} \text{ and } (g+h \to g'+h') \in \RRR \}
\]
for $i \ge 1$.
Since the graph is finite, we find $\SSS_i = \SSS_{i^*}$ for some $i^* \ge 0$ and all $i \ge i^*$,
and the set of species reachable from $\SSS_0$ is given by $\SSS_{i^*}$.

For chemical reaction networks arising from genetic recombination, we have the following result.
\begin{lem} \label{lem2}
In a process of recombination,
assume that every pair of loci gets separated by some pattern with positive rate constant.
In the resulting chemical reaction network,
every gamete is reachable from a given set of gametes,
provided that every allele is contained in some gamete in this set.
\end{lem}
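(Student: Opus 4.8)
The plan is to work with the reachable set $\GGG^*$ (the stabilized set $\GGG_{i^*}$ obtained from the given set $\GGG_0$ of gametes) and to show $\GGG^*=\GGG$. I will use repeatedly that $\GGG^*$ is closed under recombination: if $g,h\in\GGG^*$ and $(g+h\to g'+h')\in\RRR$, then $g',h'\in\GGG^*$; and that a pair $g,h$ with $|\Delta(g,h)|\ge2$ together with a non-trivial subpattern $\{I,J\}\in\PPP_{\Delta(g,h)}^*$ satisfying $\cum{c}(\{I,J\})>0$ produces exactly such a reaction, where $g_I h_J$ is the obvious recombinant (taking $g$'s alleles on $I$, $h$'s on $J$, and the common allele on $\LLL\setminus(I\cup J)$). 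The coverage hypothesis is that for every locus $i$ and allele $a\in\AAA_i$ some gamete in $\GGG_0\subseteq\GGG^*$ has $i$-th coordinate $a$.

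First I would record the combinatorial content of the separation hypothesis: for distinct loci $i,j$ and any $\Delta\subseteq\LLL$ with $i,j\in\Delta$, the cumulative rate constants of the subpatterns on $\Delta$ that put $i$ and $j$ into different parts sum to $\cum{c}(\{\{i\},\{j\}\})$, since every global pattern separating $i$ from $j$ restricts to exactly one such subpattern and is counted once. As $\cum{c}(\{\{i\},\{j\}\})>0$ by assumption, at least one subpattern $\{P,\Delta\setminus P\}$ on $\Delta$ with $i\in P$, $j\in\Delta\setminus P$ has $\cum{c}(\{P,\Delta\setminus P\})>0$, and it is automatically non-trivial.

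The heart of the argument is a single-locus transfer lemma: if $g,h\in\GGG^*$, $j\in\Delta(g,h)$ and $|\Delta(g,h)|\ge2$, then the gamete agreeing with $g$ everywhere except with value $h_j$ at locus $j$ lies in $\GGG^*$. I would prove this by strong induction on $d=|\Delta(g,h)|$. For $d=2$, with $\Delta(g,h)=\{i,j\}$, the subpattern $\{\{i\},\{j\}\}$ has positive cumulative rate by hypothesis, and recombining $g$ and $h$ along it yields precisely the desired gamete (off $\Delta(g,h)$ the parents agree, so the recombinant equals $g$ there). For $d\ge3$, take $i\in\Delta(g,h)\setminus\{j\}$ and, by the previous paragraph, a subpattern $\{P,\Delta\setminus P\}$ on $\Delta=\Delta(g,h)$ with $i\in P$, $j\in\Delta\setminus P$ and positive cumulative rate; recombining $g$ and $h$ gives $w\in\GGG^*$ equal to $g$ on $P$ and on $\LLL\setminus\Delta$ and equal to $h$ on $\Delta\setminus P$, so in particular $w_j=h_j$. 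If $\Delta\setminus P=\{j\}$ we are done. Otherwise $\Delta(w,g)=\Delta\setminus P$ has size between $2$ and $d-1$ and contains $j$, and I would iteratively apply the induction hypothesis to the pair $(w,g)$ to transfer $g$'s allele back into $w$ at each locus of $(\Delta\setminus P)\setminus\{j\}$, one at a time: the difference set shrinks by one at each step, remaining of size $\ge2$ until it becomes $\{j\}$, at which point the current gamete is the one we want. The main obstacle is exactly this bookkeeping: checking that each intermediate pair still differs at two or more loci (so a reaction exists) and that every invocation of the induction hypothesis is at a strictly smaller difference set, so the induction is well founded.

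Finally I would conclude as follows. Fix a target gamete $t\in\GGG$ and choose $u\in\GGG^*$ minimizing $|\Delta(u,t)|$. If this minimum is $0$, then $u=t\in\GGG^*$. Otherwise pick $i\in\Delta(u,t)$ and, by the coverage hypothesis, $v\in\GGG^*$ with $v_i=t_i$; then $i\in\Delta(u,v)$. If $|\Delta(u,v)|=1$, then $v$ agrees with $u$ except at $i$, so $|\Delta(v,t)|=|\Delta(u,t)|-1$; if $|\Delta(u,v)|\ge2$, the transfer lemma applied to $u$ and $v$ at locus $i$ produces a gamete in $\GGG^*$ agreeing with $u$ except with value $v_i=t_i$ at $i$, again reducing the distance to $t$ by one. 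Either case contradicts minimality, so $|\Delta(u,t)|=0$ and $t\in\GGG^*$. Since $t$ was arbitrary, $\GGG^*=\GGG$, which is the claim.
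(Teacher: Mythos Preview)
Your proof is correct, but it takes a genuinely different route from the paper's.

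The paper argues by induction on the number of loci~$L$, using the projection machinery already set up for Lemma~\ref{lem1}: for each proper subset $K\subset\LLL$ the induction hypothesis applied to the projected network shows that for every subgamete in $\GGG_K$ some lift to $\GGG$ is reachable. Taking $K=\LLL\setminus\{1\}$ and $K=\LLL\setminus\{2\}$ yields two reachable gametes that agree with the target everywhere except possibly at locus~$1$ and locus~$2$ respectively; one recombination separating loci $1$ and $2$ then produces the target directly.

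Your argument instead stays in the full network and proceeds by strong induction on the Hamming distance $|\Delta(g,h)|$, establishing a single-locus transfer lemma and then a minimal-distance argument. The advantage of your approach is that it is entirely self-contained: it does not rely on defining projected recombination systems, checking that the separation hypothesis descends, or verifying that reachability in the projection lifts back to the full network (a step the paper leaves implicit). The paper's approach is shorter once the projection framework is in place and dovetails with the inductive proofs of Lemma~\ref{lem1} and Theorem~\ref{thm:genetics}. Both rely on the same basic ingredient, namely that any two loci can be separated by a pattern with positive rate, but they organize the induction around different parameters ($L$ versus $|\Delta|$).
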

\begin{proof}
We use induction on the number of loci:

For $L=1$, the gametes coincide with the alleles.

For $L \ge 2$, 
we consider subsets of loci $K \subset \LLL$ with $|K|<L$
and project the network and the given set of gametes on the loci $K$.
In the projected network,
every pair of loci gets separated by some pattern,
and in the projected set,
every allele (at loci $K$) is contained in some gamete.
By the induction hypothesis,
every gamete $g_K \in \GGG_K$ is reachable in the projected network,
and hence some gamete $h \in \GGG$ with $h_K = g_K$ is reachable.

It remains to show that every gamete $g' \in \GGG$ is reachable.
Let $\LLL = \{1,\ldots,L\}$ and $G = \LLL \setminus \{1\}$, $H = \LLL \setminus \{2\}$.
By the argument above, some gametes $g,h \in \GGG$ with $g_G = g'_G$, $h_H = g'_H$ are reachable.
If $g'$ equals either $g$ or $h$, then it is reachable.
Otherwise, since loci $1$ and $2$ get separated by some pattern, we find the reaction $g + h \to g' + h'$, and hence $g'$ is reachable.
\end{proof}

On the one hand, by Proposition~\ref{pro1},
every marginal compatibility class contains a unique equilibrium.
Hence, the set of all equilibria can be parame\-trized by $1 + \sum_{i \in \LLL} (A_i - 1)$ independent marginals.
On the other hand, by Proposition~\ref{pro2},
every stoichiometric compatibility class contains a unique equilibrium.
Since stoichiometric classes are contained in marginal classes,
we have the following result.
\begin{cor} \label{cor}
The stoichiometric compatibility classes coincide with the marginal compatibility classes,
and $\dim(S^\bot) = 1 + \sum_{i \in \LLL} (A_i - 1)$.
\end{cor}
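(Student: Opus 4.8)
The plan is to play the two uniqueness statements against each other. Write $U = \spann\{\con_i(a) \mid i \in \LLL,\ a \in \AAA_i\}$, so that $M(p) = (p + U^\bot) \cap \Rnn^\GGG$ for $p \in \Rnn^\GGG$. Since $\con_i(a) \in S^\bot$ for all $i \in \LLL$ and $a \in \AAA_i$, we have $S \subseteq U^\bot$, which is just the inclusion $S(p) \subseteq M(p)$ already noted. Consequently, the partition of $\Rnn^\GGG$ into marginal classes is coarser than the partition into stoichiometric classes, and every marginal class is a disjoint union of stoichiometric classes.

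The core step is to show that this union consists of a single class, that is, $S(p) = M(p)$ for every $p \in \Rnn^\GGG$. Suppose, for contradiction, that some $M(p)$ contained two distinct stoichiometric classes. By Proposition~\ref{pro2}, each of them contains an equilibrium of the (equivalent) recombination dynamics, namely its unique positive one, and these two equilibria are distinct because the classes are disjoint. Both equilibria lie in $M(p)$, which contradicts Proposition~\ref{pro1}, asserting that $M(p)$ contains exactly one equilibrium. Hence $S(p) = M(p)$ for all $p$.

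From here the dimension is immediate. Take $p = \one \in \Rp^\GGG$; since $\one$ is an interior point of $\Rnn^\GGG$, the affine hull of $S(\one) = (\one + S) \cap \Rnn^\GGG$ is $\one + S$ and the affine hull of $M(\one) = (\one + U^\bot) \cap \Rnn^\GGG$ is $\one + U^\bot$. The equality $S(\one) = M(\one)$ then forces $S = U^\bot$, hence $S^\bot = U$. Finally, $\dim U$ is at least $1 + \sum_{i \in \LLL}(A_i - 1)$ by the linearly independent elements of $S^\bot$ exhibited earlier (all of which lie in $U$), and at most $\sum_{i \in \LLL} A_i - (L-1) = 1 + \sum_{i \in \LLL}(A_i - 1)$, because the $\sum_{i \in \LLL} A_i$ generators $\con_i(a)$ of $U$ satisfy the $L-1$ independent relations $\sum_{a \in \AAA_i}\con_i(a) = \sum_{a \in \AAA_1}\con_1(a)$ for $i = 2, \dots, L$ (both sides equal $\one$). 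Therefore $\dim(S^\bot) = \dim U = 1 + \sum_{i \in \LLL}(A_i - 1)$.

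I expect the only delicate point to be the core step. One must check that the equilibrium supplied by Proposition~\ref{pro2} genuinely lies in the stoichiometric class it is attached to (it does, being the positive equilibrium of that very class), that distinct stoichiometric classes are disjoint (being cosets of $S$, they form a partition), and that Proposition~\ref{pro1} is invoked in its unrestricted form on all of $\Rnn^\GGG$, not merely on the simplex. The rest is bookkeeping with cosets and affine hulls.
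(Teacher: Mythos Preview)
Your argument is correct and follows the same idea the paper uses: the paper gives no formal proof of the corollary but simply observes, just before stating it, that each stoichiometric class lies in a marginal class and that each class of either type contains a unique equilibrium (Propositions~\ref{pro1} and~\ref{pro2}), which forces the two partitions to coincide. Your write-up makes this explicit and adds the affine-hull and dimension bookkeeping that the paper leaves implicit. One small point: your core step as phrased (``some $M(p)$'') could in principle stumble on a boundary class for which Proposition~\ref{pro2} does not supply a positive equilibrium, but since you only need the contradiction at the interior point $p=\one$ to conclude $S=U^\bot$ (whence $S(p)=M(p)$ for all $p$ follows automatically), this is harmless.
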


\subsection{Convergence}

Our main results concern the convergence of the dynamics to a unique equilibrium.
For the first theorem, we provide two proofs:
one by induction (as in the original literature) and one using the entropy as a Lyapunov function.
For the second theorem, formulated in the chemical setting,
we rely on results from chemical reaction network theory
which are based on the same Lyapunov function.

\begin{thm} \label{thm:genetics}
In every marginal compatibility class and for all recombination rate constants,
a process of genetic recombination converges to the unique equilibrium given by \eqref{eq:equ}.
\end{thm}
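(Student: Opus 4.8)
The plan is to establish a strict Lyapunov function on each marginal compatibility class, namely the relative entropy with respect to the target equilibrium, and then to conclude by LaSalle's invariance principle together with Propositions~\ref{pro1} and~\ref{pro2}; I also sketch the classical induction proof as an alternative. Fix a marginal class $M(p)$ with its unique equilibrium $\hat p = \prod_{i \in \LLL} p_i$ (Proposition~\ref{pro1}). Discarding alleles that are absent, we may assume every marginal $p_i(a)$ is positive, so $\hat p \in \Rp^\GGG$. Consider
\[
H(p) = \sum_{g \in \GGG} p(g) \ln \frac{p(g)}{\hat p(g)} ,
\]
with the convention $0 \ln 0 = 0$, so $H$ is continuous on the compact, forward-invariant polytope $M(p)$, and by Gibbs' inequality $H \ge 0$ with equality only at $\hat p$; on $M(p)$ it differs by a constant from the entropy-like function $\sum_g F(p(g))$, $F(x)=x\ln x - x$, of the introduction. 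The key point is that $\hat p(g) = \prod_{i \in \LLL} p_i(g_i)$ and the marginals $p_i(a)$ are conserved, so
\[
-\sum_{g \in \GGG} p(g) \ln \hat p(g) = -\sum_{i \in \LLL} \sum_{a \in \AAA_i} p_i(a) \ln p_i(a)
\]
is constant along trajectories; hence, using $\sum_g \dot p(g)=0$, along trajectories in $\Rp^\GGG$ one has $\dot H = \sum_{g \in \GGG} \dot p(g) \ln p(g)$.

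Inserting the reversible form~\eqref{eq:dynrev} and using $\sum_{s \in \GGG}(g'+h'-g-h)_s \ln p(s) = \ln p^{g'+h'} - \ln p^{g+h}$ for each reversible reaction (recall $p^{g+h}=p(g)p(h)$), we obtain
\[
\dot H = -\sum_{(g+h \lr g'+h') \in \RRR_\lr} k(g+h \lr g'+h')\, \big(p^{g+h}-p^{g'+h'}\big)\big(\ln p^{g+h}-\ln p^{g'+h'}\big) \le 0
\]
by monotonicity of the logarithm, with equality exactly at the detailed-balancing equilibria. Thus $H$ is a Lyapunov function on $M(p)$, strictly decreasing off the equilibrium set. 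Since $M(p)$ is compact and invariant, the $\omega$-limit set $W$ of any trajectory is nonempty, compact, invariant, and $H$ is constant on it. For an interior point $q$ of $W$, the whole trajectory through $q$ stays in $W$ with $\dot H \equiv 0$, so $q$ is a detailed-balancing equilibrium, hence $q = \hat p$ by Proposition~\ref{pro1}. For a boundary point $q$ of $W$, Proposition~\ref{pro2} (reachability together with the Vol'pert--Hudjaev positivity result) shows the trajectory through $q$ lies in $\Rp^\GGG$ for positive times, where the previous argument forces it to equal $\hat p$; but $\hat p$ lies in the relative interior of $M(p)$, a contradiction. Hence $W = \{\hat p\}$, i.e. $p(t) \to \hat p$. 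An initial value on the boundary enters $\Rp^\GGG$ immediately, again by Proposition~\ref{pro2}, so the conclusion holds there as well.

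Alternatively, one argues by induction on $L = |\LLL|$, as in the population-genetics literature. For $L=1$ there is no non-trivial recombination and $p$ is constant. For $L \ge 2$, projecting the dynamics onto any proper subset $K \subsetneq \LLL$ gives a recombination dynamics on $|K|<L$ loci (with cumulative rate constants) satisfying the same hypotheses, so by the induction hypothesis $p_K(t) \to \prod_{i \in K} p_i$; since both parts of any $\{I,J\} \in \PPP^*$ are proper nonempty subsets of $\LLL$, this yields $p_I(t)\,p_J(t) \to \prod_{i \in \LLL} p_i = \hat p$. With $c^* = \sum_{\{I,J\} \in \PPP^*} c(\{I,J\}) > 0$, Equation~\eqref{eq:dynIJvar} restricted to $\PPP^*$ reads $\dot p = f(t) - c^* p$ with $f(t) = \sum_{\{I,J\} \in \PPP^*} c(\{I,J\})\, p_I(t)\, p_J(t) \to c^* \hat p$; variation of constants, $p(t) = \e^{-c^* t}p(0) + \int_0^t \e^{-c^*(t-s)} f(s)\,\text{d} s$, then gives $p(t) \to \hat p$.

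I expect the main obstacle to be the boundary analysis in the Lyapunov argument: the inequality $\dot H \le 0$ is a one-line computation, but $H$ is only well behaved in $\Rp^\GGG$, so one must invoke Proposition~\ref{pro2} — specifically, the absence of boundary equilibria and the fact that the flow instantly carries boundary points into $\Rp^\GGG$ — to prevent the $\omega$-limit set from lying on a face of $M(p)$. In the induction route the corresponding heart of the matter is the asymptotically autonomous estimate that $\dot p = f(t) - c^* p$ with $f(t) \to c^* \hat p$ forces $p(t) \to \hat p$; the inductive reduction, which rests on the projection identity $(p_I p_J)_K = p_{I'}p_{J'}$, is then routine.
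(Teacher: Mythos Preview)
Your proposal is correct and follows essentially the same two routes as the paper: a Lyapunov argument via entropy (the paper uses $H(p)=-\sum_g p(g)\ln p(g)$, you use the relative entropy with respect to $\hat p$, and you rightly observe these differ by a constant on $M(p)$), and the induction-on-$L$ reduction to an asymptotically autonomous linear equation (the paper cites Markus/Mischaikow, you write out the variation-of-constants formula explicitly). Your $\omega$-limit argument for the boundary is slightly more detailed than the paper's, but the substance is the same, resting on Proposition~\ref{pro2}.
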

\begin{proof}[First Proof (Induction)]
Every marginal compatibility class is characterized by a unique equilibrium.
Given an equilibrium $p \in S_\GGG$, that is, $p = \prod_{i \in \LLL} p_i$,
we consider trajectories in the class $M(p)$.
We proceed by induction on the number of loci:

For $L = 1$, we have $M(p)=p$.

For $L \ge 2$,
we consider subsets of loci $K \subset \LLL$ with $|K|<L$.
The projection of a trajectory $\phi: \Rnn \to M(p)$ of the dynamical system (with loci $\LLL$)
is a trajectory of the projected dynamical system (with loci $K$).
By the induction hypothesis, $\phi(t)_K \to \prod_{i \in K} p_i$ as $t \to \infty$ and hence
\[
\phi(t)_I \, \phi(t)_J \to \prod_{i \in I} p_i \, \prod_{i \in J} p_i = \prod_{i \in \LLL} p_i = p
\]
for all $\{I,J\} \in \PPP^*$.
Summing over $\{I,J\} \in \PPP^*$ in the dynamical system~\eqref{eq:dynIJvar} equivalent to~\eqref{eq:dynIJ},
we obtain the non-autonomous differential equation
\begin{align*}
\dd{\phi}{t} &= \sum_{\{I,J\} \in \PPP^*} c(\{I,J\}) \left( \phi_I \, \phi_J - \phi \right) \\
&= f(t) - \sum_{\{I,J\} \in \PPP^*} c(\{I,J\}) \, \phi \nonumber
\end{align*}
with 
\[
f(t) = \sum_{\{I,J\} \in \PPP^*} c(\{I,J\}) \, \phi_I \, \phi_J
\]
and
\[
f(t) \to \sum_{\{I,J\} \in \PPP^*} c(\{I,J\}) \, p
\]
as $t \to \infty$.
In other words, the differential equation is asymptotically autonomous.
The limiting equation
\[
\dd{\phi}{t} = \sum_{\{I,J\} \in \PPP^*} c(\{I,J\}) \left( p - \phi \right)
\]
is linear,
and hence $\phi(t) \to p$ in the limiting equation.
Moreover, $\{p\}$ is the maximal compact invariant set in the limiting system,
and therefore $\phi(t) \to p$ holds also for all solutions of the original dynamical system, see e.g. \cite{markus1956, mischaikow1995}.
\end{proof}

\begin{proof}[Second Proof (Lyapunov function)]
We consider the classical entropy function
\begin{equation*} \label{eq:entropy}
H(p) = - \sum_{g \in \GGG} p(g) \ln p(g) = - \, p^T \ln p \geq 0 
\end{equation*}
which defines a continuous function on the simplex $S_\GGG$.
If $p(g) > 0$ for all $g \in \GGG$, then $H$ is smooth and
\begin{equation*} \label{eq:entropyd1}
\dot H(p) = - \sum_{g \in \GGG} \dot p(g) \ln p(g) - \sum_{g \in \GGG} \dot p(g) = - \, \dot p^T \ln p ,
\end{equation*}
since $\sum_{g \in \GGG} p(g) = 1$.
Using the dynamical system~\eqref{eq:dynrev} equivalent to~\eqref{eq:dynIJ}, we obtain
\begin{align*} \label{eq:entropyd2}
\dot H(p) 
&= \sum_{(g+h \lr g'+h') \in \RRR_\lr} k(g+h \lr g'+h') \left( p(g) \, p(h) - p(g') \, p(h') \right) \cdot \nonumber \\
& \qquad \qquad \qquad \qquad \cdot \left( \ln (p(g) \, p(h)) - \ln (p(g') \, p(h')) \right) \geq 0 .
\end{align*}
Equality $\dot H(p) = 0$ holds if and only if $p$ is a detailed-balancing equilibrium,
that is, if and only if~\eqref{eq:equ} holds.

Given an initial point $p(0) \in \Rp^{\GGG}$ in the interior,
the entropy $H(p(t))$ increases strictly towards its maximum on $M(p(0))$,
and $p(t)$ converges to the unique equilibrium $p$ in the class $M(p(0))$.

Given an initial point $p(0) \in \Rnn^{\GGG}$ on the boundary,
we have $p(t) \in \Rp^{\GGG}$ for $t > 0$, cf.\ the proof of Proposition~\ref{pro2}.
In genetic terms,
recombination immediately produces all gametes, as long as all alleles are present in the population.
\end{proof}

The entropy as Lyapunov function was used by Akin~\cite{akin1979} and Lyubich~\cite{lyubich1992}
(referring to a paper by Kun and Lyubich~\cite{kun1980})
to prove global stability for recombination.
For chemical reaction networks with detailed balancing, see Vol'pert and Hudjaev~\cite{Vasilev1974,Volpert1985}
(who acknowledge previous use of the entropy function by Zel'dovich~\cite{Zeldovich1938}).
For complex balancing, see~\cite{higgins1968,HornJackson1972,Siegel2000,gorban2014}. 

\begin{thm} \label{thm:chemistry}
A process of genetic recombination gives rise to a reversible chemical reaction network with mass-action kinetics.
In every stoichiometric compatibility class and for all reaction rate constants,
the dynamics converges to a unique positive detailed-balancing equilibrium.
\end{thm}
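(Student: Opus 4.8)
\emph{Plan.} The first assertion requires no real work: by the construction in Subsection~\ref{subsec:chemreac} and~\eqref{eq:R}, the reactions come in reverse pairs $g+h\lr g_Ih_J+g_Jh_I$ that share a rate constant, so the network $(\SSS,\CCC,\RRR)$ with $\SSS=\GGG$ is reversible, and the reaction rates were \emph{defined} to be a rate constant times $p(g)\,p(h)$, i.e.\ the mass-action monomials $p^{g+h}$, so~\eqref{eq:dyn} is indeed a mass-action system. I would simply record this.

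For the convergence statement my plan is to combine Proposition~\ref{pro2} with the entropy-type Lyapunov function already used in the Second Proof of Theorem~\ref{thm:genetics}. Proposition~\ref{pro2} supplies, in every stoichiometric compatibility class $S(p_0)=(p_0+S)\cap\Rnn^\GGG$, a unique positive equilibrium $p^*$, which is detailed-balancing, together with the absence of boundary equilibria; so only the convergence of trajectories to $p^*$ remains to be shown. As Lyapunov function I would take the pseudo-Helmholtz free energy of Horn and Jackson,
\[
V(p)=\sum_{g\in\GGG}\Bigl(p(g)\ln\frac{p(g)}{p^*(g)}-p(g)+p^*(g)\Bigr),
\]
which is well defined since $p^*\in\Rp^\GGG$; it is continuous on $\Rnn^\GGG$, smooth and strictly convex on $\Rp^\GGG$, and satisfies $V\ge0$ with equality only at $p^*$. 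On $\Rp^\GGG$ we have $\dot V=(\ln p-\ln p^*)^{\!T}\dot p$; since $p^*$ is detailed-balancing, $\ln p^*\in\ker N^T=S^\bot$ (as in the proof of Proposition~\ref{pro2}), and because $\dot p\in S$ the contribution of $\ln p^*$ vanishes. The remaining term $(\ln p)^{\!T}\dot p$ is, by~\eqref{eq:dynrev}, a sum over $(g+h\lr g'+h')\in\RRR_\lr$ of $-k\,(p(g)p(h)-p(g')p(h'))\bigl(\ln(p(g)p(h))-\ln(p(g')p(h'))\bigr)\le0$ --- exactly the computation in the Second Proof of Theorem~\ref{thm:genetics} --- with equality iff $p$ is detailed-balancing, hence (within $S(p_0)$, by Proposition~\ref{pro2}) iff $p=p^*$.

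Finally I would pass from ``strict Lyapunov function'' to global convergence by a standard invariance-principle argument. The class $S(p_0)$ is compact --- closed, and bounded because $\one\in S^\bot$ makes $\sum_{g\in\GGG}p(g)$ constant on it --- and positively invariant. By Proposition~\ref{pro2} it contains no boundary equilibria, and --- by the reachability argument in that proof, via Lemma~\ref{lem2} and~\cite{Volpert1985} --- every trajectory starting on $\partial\Rnn^\GGG$ enters $\Rp^\GGG$ for $t>0$; thus every trajectory eventually lies in $\Rp^\GGG\cap S(p_0)$, where $V$ strictly decreases except at the single critical point $p^*$, and LaSalle's invariance principle gives $p(t)\to p^*$. (Alternatively, one could reduce to Theorem~\ref{thm:genetics}: dividing a trajectory by its constant total mass $\sum_{g\in\GGG}p(g)$ yields, up to a time change, a recombination trajectory on the simplex, whose limit is known from Theorem~\ref{thm:genetics} and Corollary~\ref{cor}.) I expect the only genuinely delicate point to be the boundary behaviour: for general complex-balanced networks the obstruction is that trajectories might accumulate on a face $\{p(g)=0\}$, but here that is already ruled out by Proposition~\ref{pro2}, so compactness of $S(p_0)$ together with LaSalle closes the proof.
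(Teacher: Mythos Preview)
Your proposal is correct. The ingredients are the same as the paper's---Proposition~\ref{pro2} for existence/uniqueness and for excluding boundary equilibria, and the entropy/free-energy Lyapunov function---but the packaging differs. The paper's own proof is a two-line citation: it invokes the theorem of Vol'pert and Hudjaev~\cite[pp.~642--643]{Volpert1985} stating that for a detailed-balancing mass-action system the $\omega$-limit set of every solution is either a unique positive detailed-balancing equilibrium or a set of boundary detailed-balancing equilibria, and then uses Proposition~\ref{pro2} to discard the latter alternative. You instead reprove that Vol'pert--Hudjaev statement in situ, using the Horn--Jackson pseudo-Helmholtz function $V$ together with LaSalle's principle, which is exactly what underlies the cited result. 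Your observation that $\ln p^*\in S^\bot$ (so that $V$ and $-H$ differ only by a constant along trajectories) is the clean way to connect your computation to the Second Proof of Theorem~\ref{thm:genetics}.

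Two small points worth being explicit about if you write this up. First, ``no boundary equilibria'' in Proposition~\ref{pro2} tacitly assumes the stoichiometric class meets $\Rp^\GGG$, i.e.\ all marginals are positive; otherwise one first restricts to the alleles actually present. Second, the LaSalle step needs a word on why the $\omega$-limit set cannot sit on the boundary: since marginals are conserved, every point of $S(p_0)$ has all alleles present, so by Lemma~\ref{lem2} and~\cite{Volpert1985} any boundary point flows immediately into the interior, whence no invariant subset of $S(p_0)$ lies on $\partial\Rnn^\GGG$. With these clarifications your argument is complete. Your parenthetical alternative---normalizing to the simplex and invoking Theorem~\ref{thm:genetics} with Corollary~\ref{cor}---is also valid and arguably the shortest self-contained route.
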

\begin{proof}
By \cite[Theorem, pp.\ 642--643]{Volpert1985}, the $\omega$-limit set of every solution
consists either of a unique positive detailed-balancing equilibrium or boundary detailed-balancing equilibria.
By Proposition~\ref{pro2},
there are no boundary equilibria,
and every solution converges to a unique positive detailed-balancing equilibrium.
\end{proof}


\section{Final remarks}

Note that we have not used a central concept of chemical reaction network theory, the deficiency
\[
\delta = m - l - s ,
\]
where $m$ is the number of complexes, $l$ the number of linkage classes, and $s$ the dimension of the stoichiometric subspace.

The deficiency zero and one theorems state that
there exists a unique (asymptotically stable) positive complex-balancing equilibrium,
for all reaction rate constants and all stoichiometric compatibility classes,
if the network is weakly reversible and either (0) its deficiency is zero
or (1a) the deficiencies of the individual linkage classes are zero or one
and (1b) the individual deficiencies add up to the total deficiency,
see~\cite{Feinberg1995a}.

In Example~1 ($L=2$, $A_1=A_2=2$), we find $\delta = 2 - 1 - 1 = 0$.
However, already in Example~2 ($L=3$, $A_1=A_2=A_3=2$),
the deficiencies of the individual linkage classes are zero, but $\delta = 16 - 7 - 4 = 5$.

In fact, the individual deficiencies are zero in the entire Scheme~3 ($L \ge 2$, $A_i=2$ for $i=1,\ldots,L$):
Every linkage class is characterized by two different alleles at some loci $K \in \LLL$ with $|K| \ge 2$
and two identical alleles at other loci.
Hence $2^{|K|-1} - 1 - (2^{|K|-1} - 1) = 0$, whereas
\begin{align*}
\delta &= m - l - s \\
&= 2^{L-1} (2^L - (1+L)) - (3^L - 2^{L-1} (2+L)) - (2^L - (1+L)) \\
&= 2^{L-1} (2^L - 1) - 3^L + 1 + L ,
\end{align*}
using $s = \dim(S) = |\GGG| - \dim(S^\bot) = 2^L - (1+L)$.
For $L=3,4,5,...\,$, we find $\delta = 5,44,259,...\,$, and the deficiency zero and one theorems do not apply.

More importantly,
there exist $\delta$ necessary and sufficient conditions on the rate constants
for the existence of complex-balancing equilibria, see Horn~\cite{Horn1972}.
The conditions involve the Laplacian matrix of the weighted directed graph of complexes and reactions,
in particular, the quotients of so-called tree constants, see~\cite{MuellerRegensburger2014}.
For the existence of detailed-balancing equilibria,
additionally the Wegscheider conditions have to be fulfilled,
that is, the products of rate constants in a cycle and its reverse must coincide, see~\cite{Wegscheider1901,Volpert1985,Feinberg1989}.
In the process of genetic recombination,
the rate constants of a reaction and its reverse coincide,
and all conditions for the existence of complex- and detailed-balancing equilibria are fulfilled.


\bibliographystyle{siam}
\bibliography{crnt,recom}

\end{document}